\def\Label#1{}
\def\const{{\rm const.}}
\let\phi=\varphi
\let\kappa=\varkappa
\def\d{{\rm d}}
\let\epsilon=\varepsilon
\let\theta=\vartheta
\let\rho=\varrho
\def\DDR{{\cal D^{\real}\kern-0.8em}}
\def\real{{\bf R}}
\def\lref#1{Lemma~\ref{#1}}
\def\aref#1{Appendix~\ref{#1}}
\def\fref#1{Fig.~\ref{#1}}
\def\cref#1{Corollary~\ref{#1}}
\def\pref#1{Proposition~\ref{#1}}
\def\tref#1{Theorem~\ref{#1}}
\def\sref#1{Sect.~\ref{#1}}
\newtheorem{theorem}{Theorem}[section]
\newtheorem{proposition}[theorem]{Proposition}
\newtheorem{lemma}[theorem]{Lemma}
\newtheorem{remark}[theorem]{Remark}
\newtheorem{conjecture}[theorem]{Conjecture}
\definecolor{mydarkgreen}{rgb}{0.0, 0.5, 0.0}
\newcommand{\fh}[1]{\textcolor{black}{#1}}
\def\HALF{{\textstyle\frac{1}{2}}}
\def\TWOTHIRDS{{\textstyle\frac{2}{3}}}
\def\FOURTHIRDS{{\textstyle\frac{4}{3}}}
\def\citep#1{\cite{#1}}
\begin{document}
\null
\title[Instabilities in effective field theories]{Instabilities Appearing in \fh{Cosmological} Effective Field theories: When and How?}

\author{Jean-Pierre Eckmann$^1$, Farbod Hassani$^{2,3}$, Hatem Zaag$^4$}

\address{$^1$ D\'epartement de Physique Th\'eorique and Section de
  Math\'ematiques, University of Geneva, Switzerland}
  \address{$^2$ Institute of Theoretical Astrophysics, Universitetet i Oslo, 0315 Oslo, Norway}
   \address{$^3$ Department of Physics, McGill University, 3600 rue University, Montreal, QC H3A 2T8, Canada}

\address{$^4$ Universit\'e Sorbonne Paris Nord ,
LAGA, CNRS (UMR 7539), F-93430, Villetaneuse, France}

\begin{abstract}
Nonlinear partial differential equations appear in many
  domains of physics, and we study here a typical equation which one
  finds in effective field theories (EFT) originated from cosmological studies. In particular, we are
  interested in the equation $\partial_t^2 u(x,t)  =  \alpha (\partial_x u(x,t))^2 +
  \beta \partial_x^2 u(x,t)$ in $1+1$ dimensions. It has been
  known for quite some time that solutions to this equation
  diverge in finite time, when $\alpha >0$. We study the
  nature of this divergence as a function of the parameters $\alpha>0 $ and $\beta\ge0
  $. The divergence does not disappear even when $\beta $ is very
  large contrary to what one might believe \fh{(note that since we consider fixed initial data, $\alpha$ and $\beta$ cannot be scaled away).} But it will take longer to appear as $\beta $ increases when $\alpha$ is fixed. We note that there are two types of divergence and we discuss the transition between these two as a function of parameter choices.
  The blowup is unavoidable
  unless the corresponding equations are modified.
  Our results extend
  to $3+1$ dimensions.
\end{abstract}
\submitto{\NL}

 \section{Introduction}
 In physics, effective field theories (EFT) are employed to
 describe fundamental theories at the low energy limit in a unified form.  This approach is widely used to express different
 physical phenomena \cite{Cheung:2007st,Pich:1998xt,Gubitosi_2013,Goldberger:2004jt,Endlich:2012vt}. This framework, which is constructed based on perturbative
 expansion, usually leads to non-linear partial 
 differential equations. 
 
Recently, the effective field theory approach has become
very popular in cosmological studies, especially to study the late time
accelerating expansion of the Universe which is driven by the so-called
dark energy component \citep{Amendola:2016saw,2012PhR...513....1C}. Based
on cosmological observations \citep{Ade:2015xua,2018ApJ...859..101S,2017MNRAS.470.2617A} the clustering of the dark
energy component is supposed to be small, so the linear approximations are justified \citep{Zumalacarregui:2016pph,Hu:2013twa}. However, in the near future high
precision measurements of the Universe will be done by the new
cosmological surveys \citep{Amendola:2016saw,Santos:2015gra,4MOST:2019,Aghamousa:2016zmz}. This motivated cosmologists \citep{Hassani:2020agf,Hassani:2020buk,Cusin:2017mzw} to study non-linear
PDEs arising from these EFT approaches to have more accurate predictions of these theories. Thus, cosmological $N$-body
simulations have been developed \citep{Hassani:2019lmy, Adamek:2015eda} which describe the evolution of structures in the Universe by solving Einstein field equations
 as well as a non-linear PDE for the dark energy 
component. In a nutshell, this non-linear PDE has non-linearities that
sometimes are dominated by a $(\partial_x u(x,t))^2$ term \citep{Hassani:2021tdd}. This motivates our current study. 
 
Using extensive numerical simulations with
$k$-evolution \citep{Hassani:2019lmy} it was discovered earlier in \citep{Hassani:2021tdd} that the solutions of
such equations can form violent singularities at \emph{finite}
time. And, depending on the cosmological parameters, these
singularities can even appear
at a time \emph{before} the current epoch of the Universe. Obviously,
this asks for a change of parameters, or for regularising the models
with additional smoothing terms. In \citep{Hassani:2021tdd} it is specifically argued that appearance of the Laplace term $\beta \partial_{xx} u(x,t) $ with large enough $\beta$ makes the system stable. \fh{It is worth noting that in cosmological context $\sqrt{\beta}=c_s$ where $c_s$ is the ``speed of sound" and determines how fast the perturbations of scalar field propagate. In cosmology usually, $0\le c_s \le 1$. However superluminal cases ($c_s \ge 1$) are also considered in the literature \cite{Babichev:2007dw, Bonvin:2006vc}.} However, we will show that the
Laplacian term does not regularize enough to avoid the finite time blowup:
It just shifts the blowup time to a later epoch. 
Perhaps, not seeing the instability in the realistic cosmological $N$-body simulations might be due to the short time period, or some other phenomena which are present in cosmological setups.

In the paper \citep{Hassani:2021tdd}, \fh{employing the cosmological $N$-body code $k$-evolution,} it was found that the PDE for the \fh{EFT of}
dark energy (in particular, $k$-essence models\footnote{ \fh{These are a general class of theories in which the action contains at most one temporal and one spatial derivative acting on the field \cite{Armendariz-Picon:2000ulo}. The $k$-essence models have been proposed as a possible explanation for the late time cosmic acceleration. In the $k$-evolution code, the $k$-essence field and other cosmological components, such as dark matter and baryonic matter, are implemented. The ``$k$'' in ``$k$-essence'' stands for ``kinetic," which refers to the kinetic energy of the theory. The idea behind $k$-essence is that the negative pressure of its fluid description is caused by its kinetic energy.}}), for some set of parameters
leads to an instability in finite time. It was shown in
\citep{PanShi_Hamilton} that the main
source of finite-time instability is due to the presence of the non-linear term in
  \begin{equ}\label{eq:pidd}
\frac{\partial^2 u(x,t)}{\partial t^2}  =\alpha \cdot (\frac{\partial u(x,t)}{\partial x})^2~,
  \end{equ}
which  appears naturally in EFT theories (in cosmological studies $u(x,t)$ is called $\pi(x,t)$). 
Here, we consider only fixed  $\alpha>0 $
but for general theories $\alpha $ can be time dependent. This time dependence appears in $k$-essence theories through the Hubble parameter $\mathcal H(t)$ \citep{Hassani:2021tdd} and we
have checked that the divergence persists when taking into account the time dependence of $\mathcal H$.

\fh{Using a contraction-mapping fixed point argument, one can easily check that the (local in time) Cauchy problem for equation \eref{eq:pidd} can be solved in the space consisting of Fourier transforms of compactly supported functions. Beyond this setting, }
it is easy to see that \eref{eq:pidd} has
solutions which diverge in finite time \citep{Hassani:2021tdd}:
In fact, for some initial conditions we can consider
$u$   of the form $u(x,t)=  f(t)x^2$. This
leads to $f''(t)=4 \alpha  \cdot f (t)^2$. 
When the initial condition is $f(0)=\frac{3}{2\alpha t_0^2}$ and
$f'(0)=\frac{9}{\alpha t_0^4}$, then we have
\begin{equ}\label{eq:aaprime}
  f(t)=\frac{3}{2 \alpha  (t_0-t)^2}~,
\end{equ}
which diverges as $t\uparrow t_0$,
when $\alpha >0$. As discussed in \citep{PanShi_stability,PanShi_scale_invariant,PanShi_Hamilton},
this divergence is of a local type and the curvature of the minima
increases to infinity in a finite time. Here we call this divergence the ``V''
type as shown in \fref{fig:u}. We give a general divergence proof  in \sref{sec:blowup}.
  \begin{figure}[ht!]
\centering\includegraphics[width=1.\textwidth]{./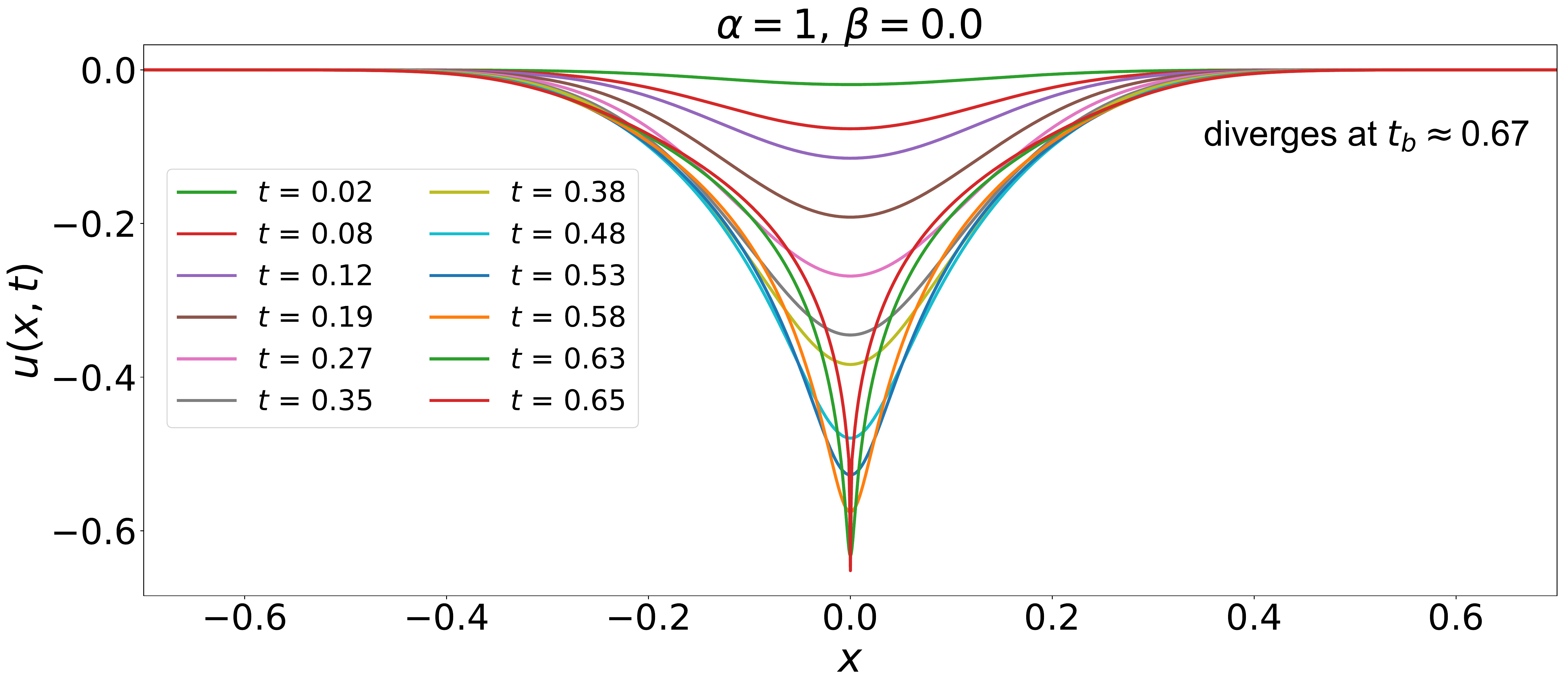}
  \caption{{\bf{The V-type divergence}}: The solution profile $u(x,t)$ is shown at different times (different colours), up to the blowup time. The curvature at the minimum diverges in
    finite time. The initial conditions are: $u(x,0)=0$ and
    $\partial_t u(x,0)=-\exp(-120 \pi^2 x^2)$ on
    $[-\pi,\pi]$.}
  \label{fig:u}
  \end{figure}

  Consider now the equation
    \begin{equ}\label{eq:ab}
\frac{\partial^2 u(x,t)}{\partial t^2}  =\alpha \cdot (\frac{\partial u(x,t)}{\partial
  x})^2+\beta \frac{\partial^2 u(x,t)}{\partial
  x^2}~.
  \end{equ}
When $\beta =0$ this is \eref{eq:pidd}. 
We  next consider the case when $\beta >0$ (and $\alpha >0$ is
  fixed).
\fh{The local in time Cauchy problem for this equation can be solved in $W^{1,\infty}\times L^\infty(R)$ (see \cite{azaiez_masmoudi_zaag_2019} for a similar strategy). 
From 
the finite speed of propagation, for given initial data at $t=0$, two cases appear for the domain of definition of the solution:\\
- either it is $\{(x,t)\;|\;t\ge 0\}$ and the solution is said to be global;\\\
- or it can be expressed as
\[
\mathcal{D}=\{(x,t);|\;0\le t <T(x) \}
\]
for some Lipschitz function $x\mapsto T(x)$ (with Lipschitz constant $1/\sqrt\beta$), where the solution is said to "blow up" in finite time. (see Appendix \ref{sec:appb} for details). In this paper, we deal with blow-up solutions.}

\medskip

\fh{As for the blow-up behaviour,}
there are two scenarios: When $\beta$ is very small,
  the solution will be of V-type but when  $\beta $ is larger, then it
  will be of a shape we call M-type, as illustrated in \fref{fig:front}.

\begin{figure}[h!]
  \centering
  \includegraphics[width=1.\textwidth]{./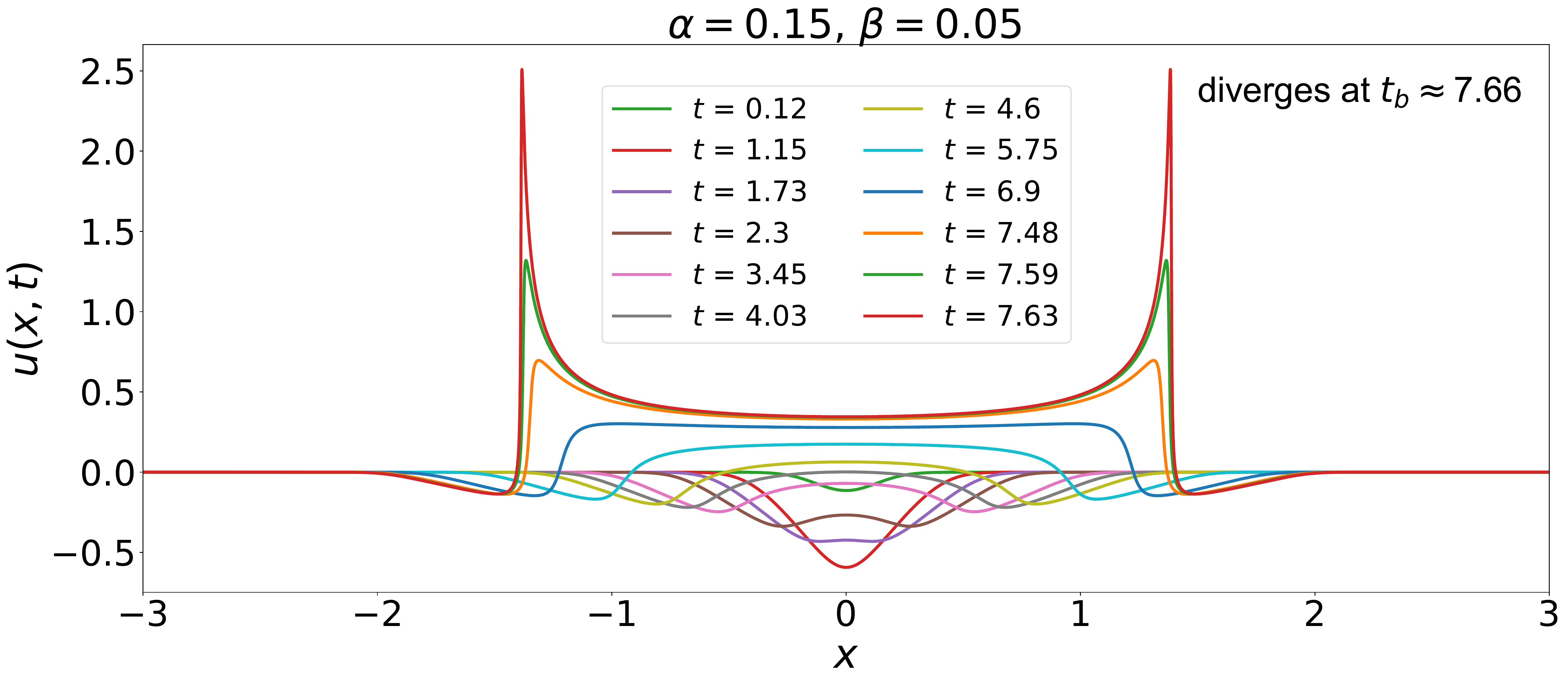}
  \caption{{\bf{The M-type divergence}}: As time advances,
    the support of the function spreads, and the function at the edge
    gets steeper, until the derivative will diverge at some time
    $T_*$.
    The simulation is for the equation \eref{eq:ab}, with
    $\alpha=0.15$ and $\beta =0.05$ and initial conditions
    $\partial_t u(x,0)=-\exp(-120 \pi^2 x^2)$, $u(x,0)=0$ on
    $[-\pi,\pi]$.
  }\label{fig:front}
\end{figure}

It was seen in simulations presented in \citep{Hassani:2021tdd}, that the
instability seems to vanish when one adds a Laplace term with large enough coefficient $\beta$.

However, this is not the whole story:
In fact, whenever $\alpha> 0$ this system is
always unstable and even large $\beta$ cannot cure the
instability forever, (this holds rigorously for a large class of initial conditions, and seems to hold in numerics, for \emph{any} initial condition with compact support). However, increasing $\beta$ does increase the
blowup time. Actually, the blowup now happens at the ``ends'' of the
``M" shape in the profile, whose walls get steeper and steeper, until the derivative becomes
infinity. This phenomenon was known for some time in the literature
\citep{rammaha,john}, and it is quite generic.
In simulations, perhaps one does not wait long enough, or the M gets
too wide in the numerically allocated spatial direction, before the divergence happens.

In the following sections, we will study in more detail the domains in
the $\alpha ,\beta $ plane 
for which ``V" and ``M" divergences will happen and the time it takes for
blowup to occur.
This will tell us for which
physical parameters the singularity is so far in the future that it
can be neglected, or that it is of a form which can easily be damped by
adding 
additional terms to \eref{eq:ab}. \fh{However, such terms do not seem to occur naturally in the case of cosmological models for dark energy, especially within the weak-field approximation \cite{Hassani:2020agf}.}
For the convenience of the reader, we repeat in two appendices some
details about the V type divergence, and we also repeat---with small
variations---a proof of the persistence of divergence for all $\beta >0$
(when $\alpha >0$). These appendices are based on \citep{PanShi_scale_invariant}
and \citep{rammaha,rammaha2}.

\section{Blowup time as a function of $\alpha $ when $\beta =0$}\label{sec:blowup}

Before we can study the dependence on $\beta $, we need to 
study the divergence time for the case $\alpha >0$.
The following is a slight adaptation of the results of \citep{PanShi_scale_invariant,PanShi_Hamilton}.

We consider the equation $u_{tt}=\alpha (u_x)^2$ on the real line.
We start by writing the solution in the form
\begin{equ}\label{eq:peteralpha}
  u(x,t)=f(x)+g(x)t +\alpha \int_0^t \d\tau \int_0^\tau  \d\tau'
  (u_x(x,\tau '))^2~. 
\end{equ}
This corresponds to the initial conditions
\begin{equ}
  u(x,0)=f(x)~,\quad  u_t(x,0)=g(x)~.
\end{equ}
We will consider the case where $f'(0)=g'(0)=0$, and we ask how the
solution behaves near $x=0$. Depending on the curvatures $f''(0)$ and
$g''(0)$, the second derivative $u_{xx}(x,t)$ will, or will not
  diverge at $x=0$. Of course, if the functions $f$ and $g$ have
  vanishing derivatives at some other point(s) $x_0$, the same
  discussion will apply at those points, and there can be one of these
  points where $u_{xx}(x_0,t)$ diverges before the one at $x=0$. In
  the following proposition, we will neglect this aspect.

\begin{proposition}\label{prop:peteralpha}
  Assume $f'(0)=g'(0)=0$. Define
  \begin{equ}\label{eq:smallc}
    c=\HALF g''(0)^2-\TWOTHIRDS \alpha f''(0)^3~. 
  \end{equ}

  Then the following cases appear:\\
  (i) If $g''(0)>0$ then $u_{xx}(0,t)$ diverges in finite time $t_+$
  given by
  \begin{equ}\label{eq:divalpha}
    t_+=\int_{f''(0)}^\infty \frac{\d b }{\sqrt{\FOURTHIRDS\alpha  b^3
        +2c}}=\int_{f''(0)}^\infty \frac{\d b }{\sqrt{\alpha (\FOURTHIRDS  (b^3 -f''(0)^3)+g''(0)^2}}~.
  \end{equ}\\
  (ii) If $g''(0)<0$ then $u_{xx}(0,t)$ will converge to $b_*$ in a
  finite time $t_-$, where
  \begin{equ}\label{eq:convalpha}
    \TWOTHIRDS \alpha b_*^3 =-c~,\text{ and }
    t_-=\int_{b_*}^{f''(0)} \frac{\d b}{{\sqrt{\FOURTHIRDS \alpha  b^3 +2c}}}~.
  \end{equ}
  At this point in time, we will have  $u_t(0,t_-)=0$
  which corresponds to (iii) and the solution will diverge after
  another finite time $t_+$ (unless $c=0$).
  \\
  (iii)  If  $g''(0)=0$, and $f''(0)\ne0$ then $u_{xx}(0,t)$ diverges
  in finite time $t_+$ given again by \eref{eq:divalpha}.\\
  (iv) If $g''(0)=0$ and $f''(0)=0$ then $u_{xx}(0,t)$ stays constant.
\end{proposition}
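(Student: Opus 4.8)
The plan is to reduce the PDE, evaluated at the critical point $x=0$, to a single autonomous second-order ODE for the curvature $b(t):=u_{xx}(0,t)$, and then to read off all four cases from the phase portrait of that ODE. First I would show that the symmetry $f'(0)=g'(0)=0$ propagates in time, i.e. that $u_x(0,t)\equiv 0$ on the interval of existence: differentiating the integral representation \eref{eq:peteralpha} once in $x$ and evaluating at $x=0$ gives an integral equation for $u_x(0,\cdot)$ with vanishing forcing, so $u_x(0,t)\equiv 0$ follows by a Gronwall/uniqueness argument. Differentiating \eref{eq:peteralpha} twice in $x$, evaluating at $x=0$, and using $u_x(0,\tau)=0$ to annihilate the cross term $u_x u_{xxx}$, I obtain
\begin{equ}
  b(t)=f''(0)+g''(0)\,t+2\alpha\int_0^t\d\tau\int_0^\tau\d\tau'\,b(\tau')^2~,
\end{equ}
which upon differentiating twice in $t$ is the autonomous equation $b''(t)=2\alpha\,b(t)^2$ with $b(0)=f''(0)$ and $b'(0)=g''(0)$.

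Next I would extract the conserved quantity. Multiplying by $b'$ and integrating gives the first integral $\HALF (b')^2=\TWOTHIRDS\alpha\,b^3+c$, where the integration constant equals $c$ of \eref{eq:smallc} by evaluation at $t=0$; hence $(b')^2=\FOURTHIRDS\alpha\,b^3+2c$. The four cases then follow by separation of variables, $\d t=\d b/\sqrt{\FOURTHIRDS\alpha\,b^3+2c}$, while tracking the sign of $b'$. In case (i), $b'(0)=g''(0)>0$ and the radicand stays positive as $b\uparrow\infty$ (since $\alpha>0$), so $b$ increases monotonically and the blowup time is the integral \eref{eq:divalpha}; this converges because the integrand decays like $b^{-3/2}$ at infinity and the radicand at the lower limit equals $g''(0)^2>0$, so the endpoint is regular. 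In case (iii), $b'(0)=0$ but $b''(0)=2\alpha f''(0)^2>0$, so $b$ sits at a strict minimum and again climbs to $+\infty$; the only new feature is that the radicand vanishes at the lower endpoint $b=f''(0)$, yet since $f''(0)\ne0$ this is a simple zero, the integrand behaves like $(b-f''(0))^{-1/2}$, and $t_+$ is still finite. Case (iv) is immediate: $b(0)=b'(0)=0$ forces $b\equiv 0$ by uniqueness for the (locally Lipschitz) ODE, so the curvature stays constant.

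For case (ii), $b'(0)=g''(0)<0$, so $b$ first decreases toward the turning point $b_*$ determined by $\TWOTHIRDS\alpha b_*^3=-c$, where $b'=0$, reaching it after the finite time \eref{eq:convalpha} (finite because near $b_*\ne0$ the radicand has a simple zero giving an integrable $(b-b_*)^{-1/2}$ singularity). At that instant $b'(t_-)=0$, which is precisely the configuration of case (iii) with $f''(0)$ replaced by $b_*$; since $b''(t_-)=2\alpha b_*^2>0$ whenever $c\ne0$, the curvature thereafter rises to $+\infty$ in a further finite time, reproducing \eref{eq:divalpha}. If instead $c=0$ then $b_*=0$ is an equilibrium and the radicand degenerates to the non-integrable $b^{-3/2}$ behaviour, so the descent takes infinite time and no divergence occurs, which is exactly the stated exception.

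The main obstacle is not the ODE analysis, which is elementary, but justifying the reduction itself: one must know the solution is smooth enough near $x=0$ on the relevant time interval for the term-by-term $x$-differentiation of \eref{eq:peteralpha} to be legitimate and, crucially, for the curvature equation to \emph{close}. This is where $u_x(0,t)\equiv 0$ is essential, since it removes the otherwise uncontrolled coupling $u_x u_{xxx}$ to still higher derivatives; establishing that identity with the required regularity is the delicate part. Granting that, the remaining care is purely in the endpoint behaviour of the separated integrals—distinguishing the integrable simple zeros at finite turning points from the non-integrable $b^{-3/2}$ behaviour at $b=0$—which is precisely what separates genuine blowup from mere convergence to $b_*$.
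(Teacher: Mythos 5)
Your proposal is correct and follows essentially the same route as the paper's proof in Appendix A: reduce to $b(t)=u_{xx}(0,t)$ via the vanishing of $u_x(0,t)$ (which kills the $u_xu_{xxx}$ coupling), derive $\ddot b=2\alpha b^2$, pass to the first integral $\HALF\dot b^2=\TWOTHIRDS\alpha b^3+c$, and separate variables case by case. Your extra care about the endpoint behaviour of the separated integrals (integrable simple zeros at $b_*\ne0$ versus the non-integrable $b^{-3/2}$ degeneration when $c=0$) is a welcome refinement of what the paper leaves implicit.
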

\begin{remark} \label{rem2d2}
  When $g''(0)>0$ and $f''(0)=0$,  then the elliptic integral can be evaluated
  explicitly and one gets
  \begin{equ}
    t_+(\alpha ,g''(0))=\frac{A}{\alpha ^{1/3} g''(0)^{1/3}}~,
  \end{equ}
  with $A \approx 2.5479$.
\end{remark}
 \begin{remark}
  Assume that $u(x,0)=0$ and $u_t(x,0)$ is  a smooth, bounded function  with
  \fh{several} well-separated extrema. Such initial conditions are typical for
  questions in cosmology.
In this case, the blowup will happen first in \fh{that} point $x_0$ for
which $C\equiv \alpha \, u_{txx}(x_0,0)$ is maximal (and $u_{tx}(x_0,0)=0$).
 In that case, \fh{the formula of Remark.~\ref{rem2d2} leads to} $t_+\sim 2.547/C^{1/3}$.
\end{remark}
\begin{remark}
  Note that $t_++t_-$ is the total time for an initial condition
  $g''(0)<0 $ to diverge (only when $c\ne0$, with $c$ defined in \eref{eq:smallc}). And then, the divergence
  time is $t_-(f''(0), g''(0))+t_+(b^*)$.

\end{remark}

The proof of these statements is given in \aref{a:hatem}.

\section{Crossover}\label{sec:crossover}

\begin{figure}[ht!]
\centering\includegraphics[width=1.\textwidth]{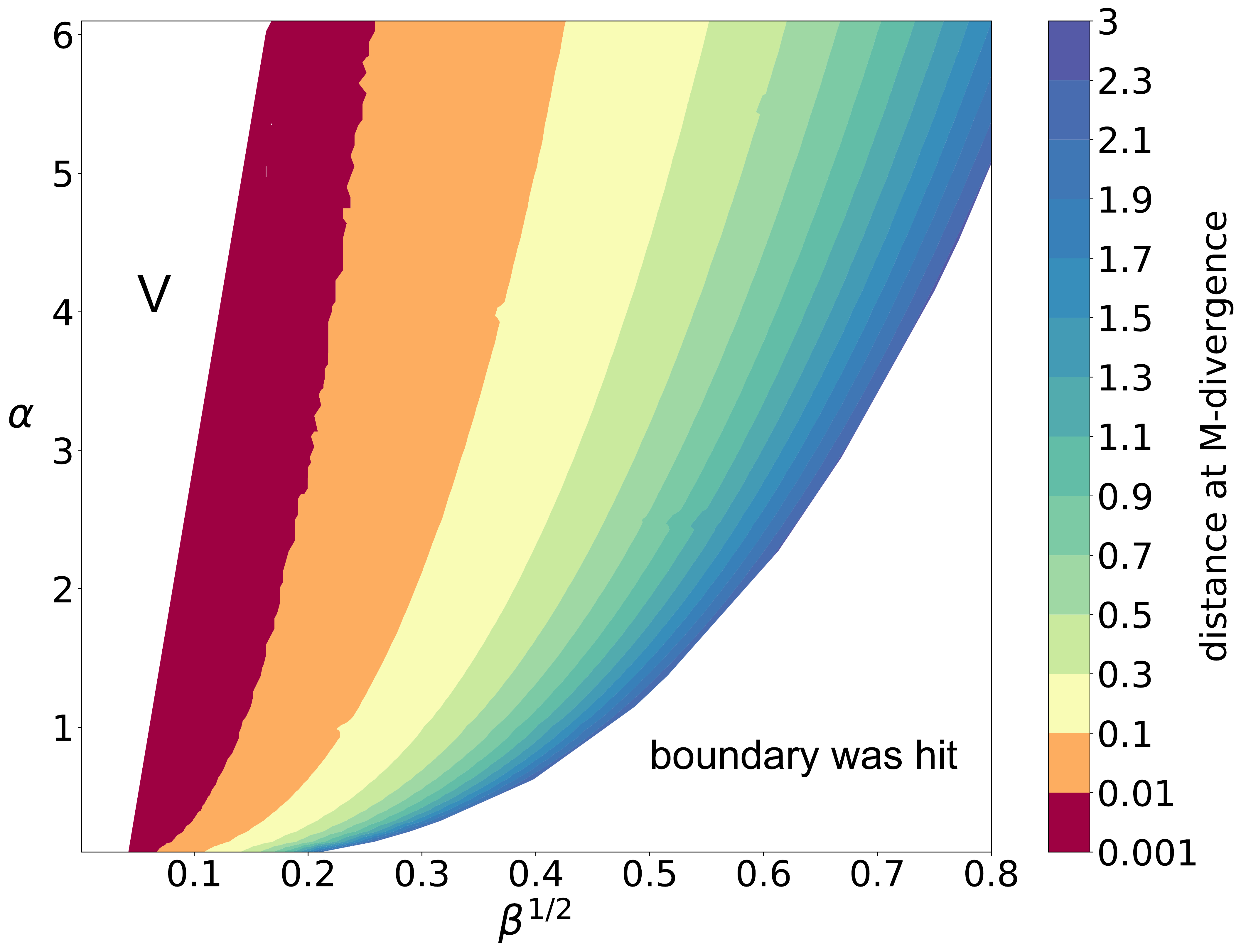}
  \caption{{\bf{Distance of singularity at divergence time as a function of  $\alpha $ and $\beta^{1/2} $:
    }}
    In the region
marked ``V'' the numerics does not allow to distinguish between V-type
and M-type divergence, since the distance of the (two) singularities
is too small to distinguish M and V.
The other white region is the set of parameters where the simulation hits the boundary before divergence (see \aref{a:numerics} for an explanation). The initial condition is $u(x,0)=0$ and $u_t(x,0)=-\exp(-120 \pi^2 x^2)$. The colour coded part shows the distance of the maximum of $|u_{xx}|$  at blowup time.
  }\label{fig:smallbetadist}
  \end{figure}

\begin{figure}[ht!]
\centering\includegraphics[width=1.\textwidth]{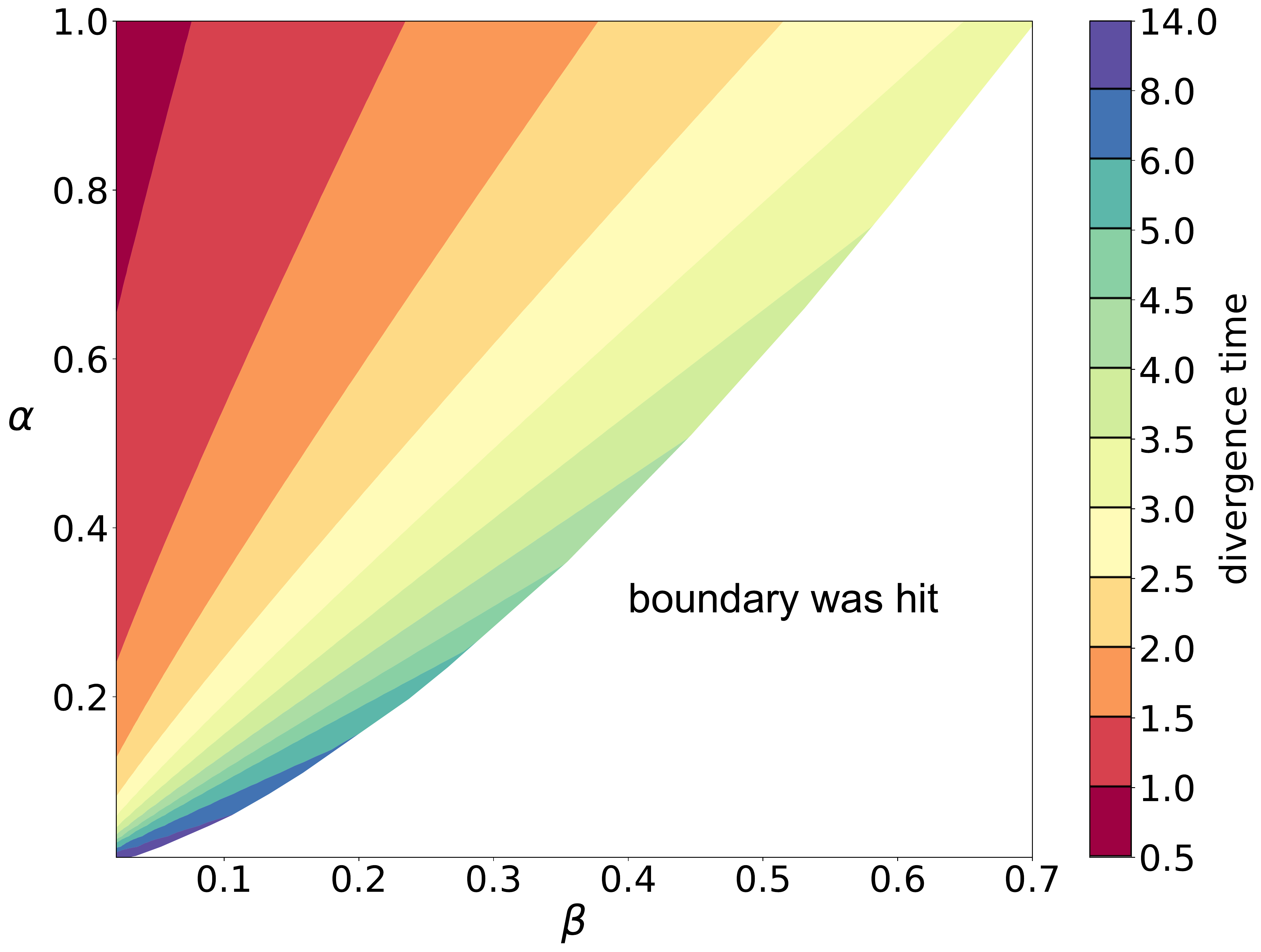}
  \caption{{\bf{Divergence time as a function of $\alpha $ and $\beta $ :}} The $x$ axis is $\beta$. The initial condition is $u(x,0)=0$ and $u_t(x,0)=+\exp(-120 \pi^2 x^2)$. The white region contains
       those  parameters ($\alpha $ and $\beta $) for which the solution hits the
        boundary before divergence, as in \fref{fig:smallbetadist}. Each colour corresponds to a different divergence
        time. Note that in this case, since the initial condition satisfies the condition of \tref{thm:t1} (we made the support finite), we \emph{know} that the solution must blow up in finite time, for all $\alpha>0$ and $\beta\ge0$.}
  \label{fig:smallbeta}
  \end{figure}

\begin{figure}[ht!]
\centering\includegraphics[width=1.\textwidth]{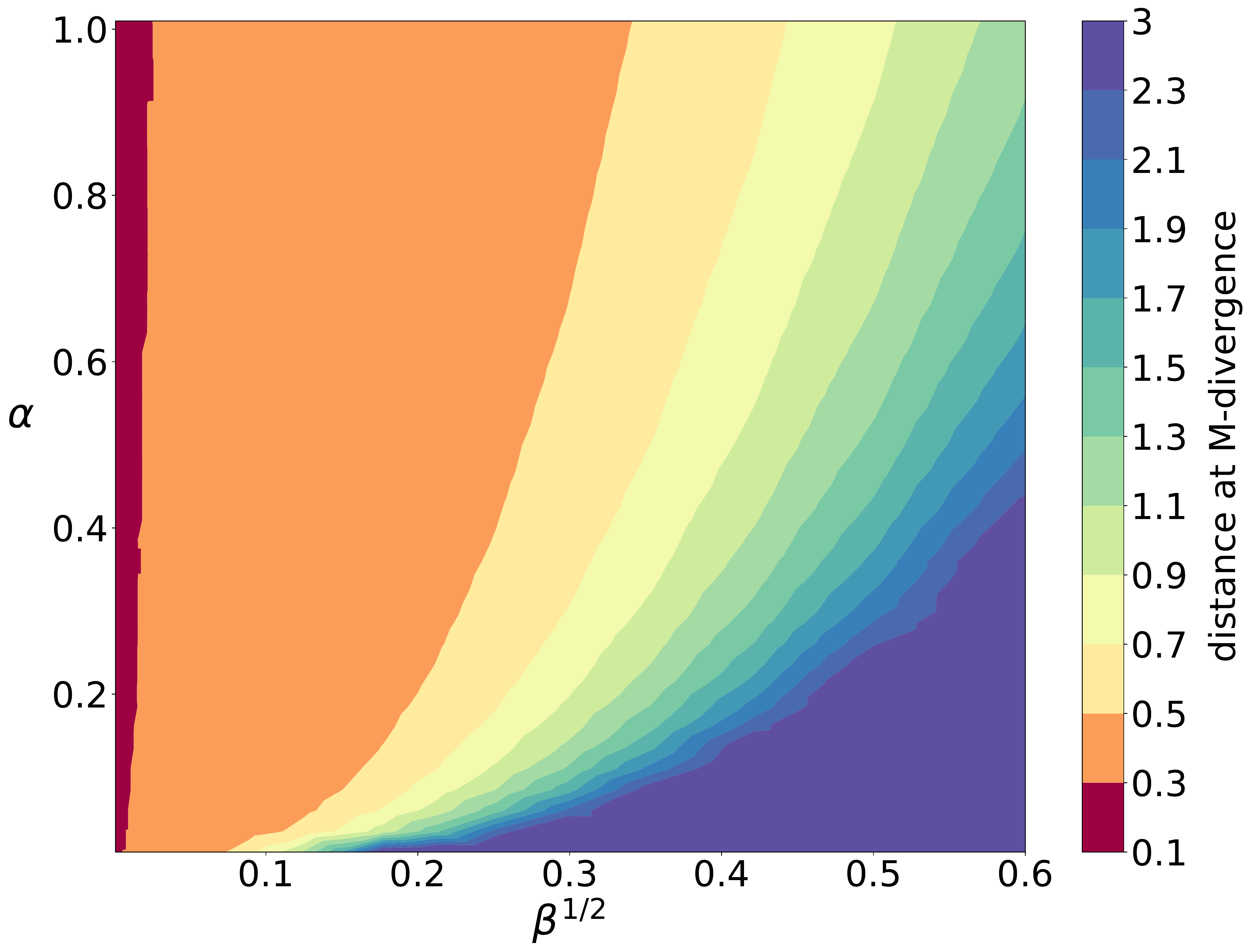}
  \caption{{\bf{Distance of singularity at divergence time as function of  $\alpha $ and $\beta^{1/2} $: 
        }} Because the initial condition is now $u(x,0)=0$ and $u_t(x,0)=+\exp(-120 \pi^2 x^2)$,
    \emph{all} divergences seem to be of M-type (see \tref{thm:t1}, which only asserts divergence in finite type, but the proof suggests divergence at near the advancing front). Note that for all
    $\alpha $ and $\beta>10^{-6}$ considered in this graph, the distance 
    of
    the singularity from  $x=0$ at blowup is at least $0.1795$ (in the interval    $[-\pi,\pi]$). For this initial condition and the $\alpha$, $\beta$, considered, blowup always happens before the wave hits the boundary.}
    \label{fig:allm}
  \end{figure}

\fh{The equation \eref{eq:ab} can be rescaled in space and time, to eliminate either $\alpha$ or $\beta$. This rescaling will be used in \aref{sec:appb}, when we show that blowup is unavoidable in many cases. Here, we ask a different question as in cosmological studies it is important to know the blowup as a function of $\alpha$ and $\beta$ for \emph{a fixed} initial perturbation.  And then, the scaling cannot be used.}

As we have indicated in the introduction, for a fixed $\alpha >0$ there
will be a crossover between the V-type and the M-type divergence as
one varies $\beta $. This crossover is important in general, because the nature of the divergence is different in the two cases:
For the V-type divergence, we see a \emph{localised} blow-up of the
second derivative in a minimum, as shown in \fref{fig:u}. In the second,
M-type case, the divergence happens at some distance from the critical
point of the initial condition (see \fref{fig:front}). A vertical wall forms, and this wall
moves outward from the centre, basically with the propagation speed
of the wave (which is $\sqrt{\beta }$ for \eref{eq:ab}). 

The divergences in the EFT framework are mainly important, because
they are either a hint for entering the strong field regime (where the
perturbative expansion is not valid anymore) or a breakdown of the
underlying fundamental theory \citep{Hassani:2021tdd}. The divergence type
should correspond to physical phenomena happening at high energy/small
scales. 
Especially it may help to introduce appropriate mechanisms to remove
such instabilities.  As an example the V-type divergence is localized,
and in cosmological studies (as suggested in \citep{Hassani:2021tdd}) can be
used as an origin of super massive black holes. On the other hand the
M-type blowup resembles caustics formation in the Universe
\citep{Babichev:2017lrx,Babichev:2016hys}. These two cases have different signatures
in cosmological observables \citep{Sawicki:2013wja}. And therefore it is useful to
distinguish them. 

We illustrate the various possibilities in Figs.~\ref{fig:smallbetadist}--\ref{fig:allm}.
In \fref{fig:smallbetadist} we consider
initial conditions with a minimum, namely $-\exp(-120 \pi^2 x^2)$. In this
case, depending on $\alpha $ and $\beta $ we can see a V-type
divergence or an M-type divergence. When $\alpha $ is large and $\beta
>0$ not too small, we detect clearly an M-type divergence. But when
$\beta $ is too close to 0, the numerics breaks down, and  one
sees something like a V-type divergence in the region ``V'' of
\fref{fig:smallbetadist}. However, it is also possible that in fact
for all $\beta >0$ one has M-type divergence, with the two walls of
the M so close together that the numerics gets unreliable. With our current
understanding we conjecture that it is always M-type when $\beta
>0$. Note that, in any case \tref{thm:t1} shows that for all $\beta
>0$ there is finite time divergence (the proof only applies to initial conditions of a certain positivity type---the function $M$ of \aref{sec:appb}, but we have seen the M-type divergence for general non-trivial initial conditions with compact support).

Indeed, when $\alpha $ is large
and $\beta >0$ is small, then the minimum will get more and more
pointy as in \fref{fig:u}, until the second derivative diverges. This happens
in the white region ``V'' in \fref{fig:smallbetadist}. In the coloured parameter
regions, the negative initial conditions get more pointy, until the
Laplacian term regularises the central part, which then grows,
becomes positive and then diverges away from the centre, as in \fref{fig:front}.

When the initial condition is positive, with a local maximum, then the
situation is somewhat simpler, because the transition from negative to
positive is absent. No V singularity can form. Note that this is
consistent with the discussion of the cases (i) and (ii) in
\pref{prop:peteralpha}. The divergence time as function of $\alpha $
and $\beta$ is shown in \fref{fig:smallbeta} for the divergence time,
and in \fref{fig:allm} for the divergence distance, by which
we mean the distance of the singularity from the coordinate
origin. We can not offer a formula for the curves in
either of the figures.

\section{Conclusions and discussion}\label{sec:Conclusions}
In this article we discussed the equation $\partial_t^2 u(x,t)  =  \alpha (\partial_x u(x,t))^2 +
  \beta \partial_x^2 u(x,t)$. These types of PDEs appear naturally in
  the effective field theory descriptions of physical systems, where
  one approximates the equations assuming weak fields. We specifically
  discuss the two divergence types arise from these equations, namely
  the ``V" and the ``M" type. We discuss how and when these
  instabilities are generated. In some cosmological studies
  \cite{hassani_JCAP} it is suggested that the term $\beta \partial_x^2
  u(x,t)$ stabilises the system so that the instability caused by $
  \alpha (\partial_x u(x,t))^2$ vanishes. This stabilisation is
  especially motivated by realistic cosmological studies. While our
  results are in agreement with \citep{Hassani:2021tdd,PanShi_Hamilton,hassani_JCAP} regarding
  the V-type blowup which happens for small $\beta$, our results show
  that even for large $\beta$ the instability is unavoidable, however
  it is always of the new M-type.
  The reasons that this instability is
  not seen in the realistic cosmological simulations for large $\beta$
  (or $c_s^2$ in cosmology)  could probably be due to 1) the role of
  gravity which is neglected in our study 2) the difference in
  boundary conditions or 3) the instability exists but will appear
  beyond the times considered in a study. Preliminary studies show
  that the results carry over to the 3+1 dimensional
  situation.\footnote{In the $3+1$ dimensional case with spherical
    symmetry, the equation \eref{eq:ab} is simply replaced by
     $u_{tt}(r,t)=\alpha (u_r(r,t))^2 +\beta
    \left(u_{rr}(r,t)+\frac{2}{r}u_r\right)$, where $r$ is the radial
    coordinate. We have done the corresponding numerical experiments
    for this case.}
\newpage

\section*{Appendices}
\begin{appendices}

\section{Divergence time when $\beta =0$}\label{a:hatem}

Here we prove \pref{prop:peteralpha}.
\begin{proof}

Define 
\begin{equa}
  a(t)= u_x(0,t)~,\quad b(t)=u_{xx}(0,t)~.
\end{equa}
Then \eref{eq:peteralpha} leads to
\begin{equa}
  a(t)&=f'(0)+g'(0)t+2\alpha \int_0^t \d\tau \int_0^\tau  \d\tau' u_x(0,\tau ')
  u_{xx}(0,\tau ')~,
\end{equa}
and therefore
\begin{equa}\label{eq:aalpha}
  \ddot a(t)=2\alpha  a(t)\,b(t)~.
\end{equa}
Similarly,
\begin{equa}
  b(t)=f''(0)+g''(0)t +2\alpha \int_0^t \d\tau \int_0^\tau  \d\tau'
\left(  (u_{xx}(0,\tau '))^2
  +u_{x}(0,\tau ')u_{xxx}(0,\tau ')\right)~.
\end{equa}
From this, we deduce
\begin{equa}[eq:balpha]
  \ddot b(t)&=
  2\alpha (u_{xx}(0,t))^2
  +2\alpha u_{x}(0,t)u_{xxx}(0,t)\\
  &= 2\alpha b(t)^2 + 2\alpha a(t)u_{xxx}(0,t)~.
\end{equa}
Since we assume $f'(0)=g'(0)=0$ we find from \eref{eq:aalpha} that
$a(t)=0$ for all $t$ for which $b(t)$ is finite. Therefore,
\eref{eq:balpha} reduces to
\begin{equa}\label{eq:bddalpha}
\ddot b(t)=2\alpha (b(t))^2~.  
\end{equa}
We will discuss this equation. For computing the divergence time, it
is useful to transform the equation as follows: Multiplying by $\dot b$
leads to
\begin{equa}
  \HALF \frac{\d}{\d t}(\dot b(t))^2 = \TWOTHIRDS \alpha \frac{\d }{\d
  t}b(t)^3~,
\end{equa}
or, for some $c$,
\begin{equa}\label{energyalpha}
  \HALF(\dot b(t))^2  = \TWOTHIRDS\alpha (b(t))^3 +c~.
\end{equa}
Note that looking at $t=0$ we find
\begin{equ}\label{eq:calpha}
  c=\HALF \dot b(0)^2 -\TWOTHIRDS\alpha  b(0)^3= \HALF  g''(0)^2
  -\TWOTHIRDS\alpha  f''(0)^3~,
\end{equ}
which is the definition in the proposition. Note that
\begin{equa}
  b(0)=f''(0) ~, \quad \text{and}\quad \dot b(0)=g''(0)~.
\end{equa}

We consider first the case where
$g''(0)>0$. Then $\dot b(0)>0$ and from \eref{energyalpha} we find that
\begin{equa}\label{eq:dotbalpha}
  \dot b(t)  =\sqrt{\FOURTHIRDS\alpha  (b(t))^3 +2c}~,
  \end{equa}
which means $b$ is increasing and the quantity below the square
  root 
  is always positive. Using standard techniques, we get 
\begin{equa}
  \d t  = \frac{\d b}{\sqrt{\FOURTHIRDS\alpha  b^3 +2c}}~.
\end{equa}
From \eref{eq:dotbalpha} we deduce 
the divergence time $t_+$,
\begin{equa}\label{eq:t2alpha}
  t_+=\int_{b(0)}^\infty \frac{\d b }{\sqrt{\FOURTHIRDS\alpha  b^3 +2c}}~.
\end{equa}
This proves \eref{eq:divalpha}.

The case $g''(0)<0$ is handled similarly, but now \eref{eq:dotbalpha} is
replaced by 
\begin{equa}\label{eq:dotbminusalpha}
  \dot b(t)  =-\sqrt{\FOURTHIRDS\alpha  b(t)^3 +2c}~.
\end{equa}
This means that $b$ is decreasing until the square root in
\eref{eq:dotbminusalpha} vanishes. This defines $b_*$, and then \eref{eq:t2alpha} is replaced by
\begin{equ}
  t_-=\int_{b_*}^{b(0)} \frac{\d b }{\sqrt{\FOURTHIRDS \alpha b^3 +2c}}~.
\end{equ}
This leads to \eref{eq:convalpha}.

The assertions under (iii) are a simple variant of (i) and (ii). The
difference is that because $g''(0)=0$, we find now that $c=-\TWOTHIRDS
\alpha f''(0)^3$, and $c\ne0$ by the assumption $f''(0)\ne0$.
Note that in this case, the positivity of  $\dot b(t)$
  follows from the second order ODE \eref{eq:bddalpha}, given that
$b(0) = f''(0)\neq 0$ and $\dot b(0)=g''(0)=0$.
The only remaining case is (iv), $g''(0)=f''(0)=0$,  which implies that
$b(0)= \dot b(0)=0$, hence, directly
leads to $b(t)=b(0)=0$ by \eref{eq:bddalpha}.
\end{proof}

\section{Finite time divergence}\label{sec:appb}

\fh{We adapt here the proof of \citep{rammaha} to the $1+1$ dimensional
context}. The proof actually works in the same way in higher
dimensions, for which it was already spelled out in \citep{rammaha,rammaha2}.
Since we deal here only with the fact that the equation will diverge
in finite time, it suffices to consider instead of $\alpha >0$, $\beta
>0$, the simpler form  
  \begin{equ}\label{equ:normalized}
    u_{tt}-u_{xx}= u_x^2~.
  \end{equ}
Indeed, if $v(x,t)$ solves $v_{tt}=\alpha (v_x)^2 +\beta v_{xx}$, then 
$u(x,t)=\frac{\beta }{\alpha } v(\beta ^{-1/2} x,t)$ satisfies
\eref{equ:normalized}. Note that we work
  in $\real$, and not, as happens in some simulations, in periodic
  boundary conditions.
  Assume the initial conditions are
  \begin{equa}
    u(x,0)&=f(x)~,\quad
    u_t(x,0)=g(x)~,
  \end{equa}
  with $f$, $g$ having support in $|x|<X$.
  We may assume $(f,g)\in H$ for some functional space, for example
  $H=W^{1,\infty}(\real)\times L^\infty(\real)$.\\
  Using a standard fixed point technique, we can find a local in time
  solution $u(t) \in C([0,t_0], H)$ for some small $t_0>0$. From the
  finite speed of propagation, using a cut-off technique, this global
  (in space) existence result extends to some local in time existence result in slices of
  backward cones of slope 1. This way, we can see that our solution is
  defined beyond the above-mentioned strip $\real \times [0,t_0]$, and extends to a larger domain of definition, which happens to be a union of backward light cones, with different heights. From elementary considerations, one of the following cases occurs:

  \medskip
  
(i) Either the union is the half-space $\real\times [0,\infty)$. We say in that case that the solution is ``global'' (for forward time).

\medskip

(ii) Or, the union writes as
\[
\{ (x,t) \; | \; 0\le t < T(x)\}
\]
for some 1-Lipschitz function $T : \real \to \real$. In that case, we say that
$u$ ``blows up in finite time.'' Note that by construction, we have a
local blow-up time for each $x\in \real$, namely $T(x)$. For more details
on the construction of the domain of definition, see \citep{alinhac1995} and
also \citep{azaiez_masmoudi_zaag_2019}.

  Define now, see \citep{rammaha}, for $x>0$,
  \begin{equ}
    M(x)=\HALF f(x)+\HALF \int_x^X g(\xi)\d\xi~.
  \end{equ}
 
\begin{theorem}\label{thm:t1}
Assume  there is an $X_0\in (0,X)$ for which $M(x)\ge0$ for all
  $x\in(X_0,X)$,
  and also
  \begin{equ}
    \int_{X_0}^X M(\xi)\d \xi \equiv\epsilon >0~.
  \end{equ}
  Then, the solution blows up in finite time, in the sense of the
  definitions above.
\end{theorem}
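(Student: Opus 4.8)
The plan is to work with the normalized equation \eref{equ:normalized} and to exploit the factorization $(\partial_t-\partial_x)(\partial_t+\partial_x)u=u_x^2$. First I would introduce the Riemann-type invariants $r=u_t-u_x$ and $s=u_t+u_x$, so that $u_x=\HALF(s-r)$ and
\[
r_t+r_x=u_x^2,\qquad s_t-s_x=u_x^2,
\]
both with the \emph{nonnegative} source $u_x^2=\tfrac14(s-r)^2$. Hence $r$ is nondecreasing along the rightward characteristics $x-t=\text{const}$ and $s$ along the leftward ones $x+t=\text{const}$. By finite speed of propagation $u\equiv0$ for $x\ge X+t$, so on the right front $x=X+t$ both invariants vanish; integrating the $s$-equation inward from the front then yields the key one-sided estimate $s\ge0$ everywhere behind the right front. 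A short computation also recasts the hypothesis as a statement about the \emph{initial} rightward invariant: since $r(\cdot,0)=g-f'$ and $f(X)=0$, one has $M(x)=\HALF\int_x^X r(\xi,0)\,d\xi$, so $M\ge0$ on $(X_0,X)$ and $\int_{X_0}^X M=\epsilon>0$ assert precisely that a weighted mean of $r(\cdot,0)$ is strictly positive.

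Next I would track the weighted average of $r$ over the rightward characteristics issuing from $(X_0,X)$,
\[
F(t)=\int_{X_0}^X (\rho-X_0)\,r(\rho+t,t)\,d\rho .
\]
Using $\frac{d}{dt}r(\rho+t,t)=u_x^2$ gives $F'(t)=\int_{X_0}^X(\rho-X_0)\,u_x^2(\rho+t,t)\,d\rho\ge0$, while interchanging the order of integration identifies $F(0)=2\int_{X_0}^X M=2\epsilon>0$. Thus $F$ is monotone and bounded below by a positive constant coming \emph{directly} from the hypothesis.

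The heart of the proof is to upgrade $F'\ge0$ to a Riccati-type inequality $F'(t)\ge c\,F(t)^2$ with $c=c(X-X_0)>0$. The natural route is Cauchy--Schwarz in the weight $(\rho-X_0)$, which yields $\int(\rho-X_0)r^2\ge \tfrac{2}{(X-X_0)^2}F^2$ and would close the loop if one could replace $u_x^2=\tfrac14(s-r)^2$ by a fixed multiple of $r^2$. This is exactly the main obstacle: the coupling to the \emph{leftward} invariant $s$ in $u_x=\HALF(s-r)$ could in principle cause cancellation. I expect to remove it using the two structural facts above—$s\ge0$ behind the front and the growth of $r$ along rightward characteristics—to show that where the weighted mass of $r$ concentrates one has $(s-r)^2\ge\lambda\,r^2$, so the quadratic feedback survives the averaging. (Equivalently, one may integrate $u$ itself over the forward light cone with apex $(X,0)$, where $s\ge0$ holds throughout, and use a linear test weight to obtain $\int_{X-t}^{X+t}u_x^2\,dx\ge \tfrac{3}{8}F^2 t^{-3}$ with $F=\int_{X-t}^{X+t}u\,dx$; then the obstacle reappears as the single boundary flux $r(X-t,t)$, to be controlled by the same monotonicity.)

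Finally, once $F'\ge cF^2$ holds with $F(0)>0$, the elementary comparison $-\frac{d}{dt}(1/F)\ge c$ forces $F(t)\to\infty$ at some finite $T_*\le 1/(2c\epsilon)$. Since $F(t)\le\tfrac12(X-X_0)^2\sup_x|r(x,t)|$, this implies $\|u_t\|_{L^\infty}+\|u_x\|_{L^\infty}\to\infty$ before $T_*$, so the solution leaves $W^{1,\infty}\times L^\infty(\real)$ in finite time. In the language of \aref{sec:appb} the union of backward cones is then not all of $\real\times[0,\infty)$, i.e. case (ii) occurs and $T(x)<\infty$ for some $x$, which is the asserted finite-time blowup.
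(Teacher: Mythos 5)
Your overall skeleton (a positive weighted functional, Cauchy--Schwarz, then a Riccati inequality forcing finite-time divergence) is the right one and is also the skeleton of the paper's proof, and your preliminary observations are correct: $M(x)=\HALF\int_x^X r(\xi,0)\,\d\xi$, $F(0)=2\epsilon$, $F'\ge 0$, and $s\ge 0$ on the set $\{x+t\ge X\}$ (though not ``everywhere behind the right front'': for $x+t<X$ the leftward characteristic starts inside the support of the data, where $s(\cdot,0)=g+f'$ has no sign). But the central step is missing, and you say so yourself. To close the loop you need $F'(t)=\frac14\int_{X_0}^X(\rho-X_0)\,(s-r)^2\,\d\rho\ \ge\ c\,\bigl(\int_{X_0}^X(\rho-X_0)\,r\,\d\rho\bigr)^2$, and the proposed mechanism --- that $(s-r)^2\ge\lambda r^2$ ``where the weighted mass of $r$ concentrates'' --- is neither proved nor plausible: both invariants are driven by the \emph{same} nonnegative source $u_x^2$ along their respective characteristics, so nothing prevents $s$ and $r$ from being simultaneously large and nearly equal on the set carrying most of the weight, in which case $u_x=\HALF(s-r)$ is small while $F$ is large and the quadratic feedback is destroyed. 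Your fallback route ($F=\int_{X-t}^{X+t}u$ over the forward cone) has the same unresolved hole in a different place: differentiating twice produces the boundary flux $r(X-t,t)$, and monotonicity along rightward characteristics only gives $r(X-t,t)\ge r(X-2t,0)=g(X-2t)-f'(X-2t)$, a quantity with no sign under the hypotheses --- the theorem assumes positivity of the \emph{integral} quantity $M$, not pointwise positivity of $r(\cdot,0)$.

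The paper's proof (following Rammaha) sidesteps exactly this obstruction by a different choice of functional. It sets $H(t)=\int_{X_1}^t(t-\tau)\int_{\tau+X_0}^{\tau+X}u(\xi,\tau)\,\d\xi\,\d\tau$, so that $H''(t)=\int_{t+X_0}^{t+X}u(\xi,t)\,\d\xi=\epsilon+G_1(t)$ by D'Alembert's formula (this is where $M$ enters, exactly), with $G_1$ a weighted integral of $u_x^2$ over backward cones. The Cauchy--Schwarz step then compares $\int\phi\,u_x^2$ with $\bigl(\int\phi\,u_x\bigr)^2/\int\phi$, and --- this is the key trick you are missing --- the linear functional $\int\phi\,u_x$ is converted by integration by parts in $\xi$ (using $u=0$ at $\xi=\tau+X$) into $-H(t)$ itself. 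No pointwise lower bound of $u_x^2$ in terms of $r^2$ is ever needed, and the $r$--$s$ cancellation problem never arises. If you want to salvage your approach, the natural repair is to replace your $F$ by a weighted integral of $u$ (or, equivalently, integrate your weight by parts so that the linear functional appearing in Cauchy--Schwarz is literally $\int w\,u_x$ and not $\int w\,r$); at that point you essentially reconstruct the paper's argument. As written, the proposal is an honest plan with its hardest step left open, not a proof.
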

\begin{remark}
  Note that the theorem is shown under the assumption that
  $M(x)>0$. This covers cases where $u(x,0)$ and/or $u_t(x,0)$ are positive
  (or positive near the edge of their support).  The case of negative
  $M$ is not covered by the literature, nor by our proof. However, we
  have studied many cases with negative initial $M$. For example, the
  case $u(x,0)=0$ and $u_t(x,0)=-exp(-C\cdot x^2)$ of \fref{fig:smallbetadist}. In all these cases we seem
  to see ``M-type'' divergence. We can consider the solution $u(x,t)$,
  $u_t(x,t)$ as a new initial condition for any $t$. Then, we have
  observed that $M$ starts out negative, decreases, and finally crosses
  0. From that point on, we are again in the domain of validity of
  \tref{thm:t1}, and, indeed, the solution diverges. (This is reminiscent of the two cases in \aref{a:hatem}.)
\end{remark}
\begin{proof} It suffices to consider the situation where $u(x,0)$ and
  $u_t(x,0)$ have
support in $|x|\le X$, but  we will consider only the side of positive
$x$ in the sequel. Clearly, $u(x,t)=0$ for $x\ge t+X$ due to the
finite propagation speed.
One estimates now the function
\begin{equ}
  H(t)=\int_{X_1}^t (t-\tau )\int _{\tau +X_0}^{\tau +X} u(\xi,\tau
  )\d\xi \,\d\tau ~.
\end{equ}
Here, $X_1=(X-X_0)/2$.
From the definition, we get
\begin{equ}\label{eq:hpp}
  H''(t)=\int _{t +X_0}^{t +X} u(\xi,t)\d\xi~.
\end{equ}
One has the explicit formula
\begin{equ}\label{eq:uxt}
  u(x,t)=u_0(x,t)+\HALF\int_0^t \int_{x-t+\tau }^{x+t-\tau }
  u_x(\xi,\tau )^2 \d\tau \,\d\xi~,
\end{equ}
with the ``free evolution''
\begin{equ}
  u_0(x,t)=\HALF\left(f(x-t)+f(x+t)\right)+\HALF \int_{x-t}^{x+t} g(\xi)\d\xi~.
  \end{equ}
When $x\ge t+X_0$ and $X\ge t \ge X_1$,
then $x+t\ge X$ and therefore $f(x+t)=0$ and therefore, in this region,
\begin{equ}
  u_0(x,t)=\HALF f(x-t) + \HALF\int_{x
    -t}^{x+t} g(\xi) \d \xi~.
\end{equ}
We get from \eref{eq:uxt} and \eref{eq:hpp},
\begin{equ}
  H''(t)=G_0(t)+G_1(t)~,
\end{equ}
with
\begin{equ}
  G_0(t)= \int_{t+X_0}^{t+X}  u_0(x,t)\d x = \int_{t+X_0}^{t+X}
  M(x-t)\d x=\int_{X_0}^X M(x)\d x= \epsilon~,
\end{equ}
by the definition of $\epsilon $ in \tref{thm:t1}.
The nonlinearity leads to
\begin{equ}
  G_1(t)= \int_{t+X_0}^{t+X}\d x \int_{0}^{t}\d\tau
  \int_{x-t+\tau}^{x+t-\tau}\kern-2em\d\xi \,u_x(\xi,\tau)^2~.
\end{equ}
In \lref{lem:3} below, we show that for $t>X_1\equiv(X-X_0)/2$ one has
\begin{equa}\label{eq:lem2}
  G_1(t)\ge\frac{1}{t+X}\int_{0}^t\d\tau  \int_{\tau +X_0}^{\tau +X} \d\xi
  \,(t-\tau )\,(\xi-\tau -X_0) \,u_x(\xi,\tau)^2~.
\end{equa}

We use now the Schwarz inequality in the form
\begin{equ}
  \int \phi \psi = \int \phi^{1/2} (\phi^{1/2}\psi)\le \left(\int
  \phi\psi^2\right)^{1/2} \, \left(\int \phi\right)^{1/2}~,
\end{equ}
with $\phi=(t-\tau
)(\xi-\tau -X_0) $ and $\psi=u_x$. This leads to
\begin{equ}
  G_1(t)\ge F^2(t)/J(t)~,
  \end{equ}
with
\begin{equ}
F(t)=\int_0^t \int_{\tau +X_0}^{\tau +X} (t-\tau
  )(\xi-\tau -X_0) u_x(\xi,\tau )\d \xi \,\d\tau  
\end{equ}
and
\begin{equ}
  J(t)=\int_0^t \int_{\tau +X_0}^{\tau +X}  (t-\tau
  )(\xi-\tau -X_0)\d\tau \,\d\xi=\frac{(X-X_0)^2 t^2}{4}~.
\end{equ}
If we integrate the expression for $F$ by parts (in $\xi$),
we get
\begin{equ}
  F(t)=-\int_0^t \int_{\tau +X_0}^{\tau +X} (t-\tau
  ) u(\xi,\tau )\d \xi \,\d\tau  = -H(t)~.
\end{equ}
Therefore, we find finally
\begin{equ}\label{eq:final}
  H''(t)\ge G_0(t)+ \frac{H(t)^2}{J(t)}~.
\end{equ}
Fix now $T$ and we will show that the solution cannot exist
for
$T>T_*$, where $T_*$ will be computed in the proof:
We use here Lemma 1 from \citep{rammaha2} adapted to the 1d case.
The ingredients are that
\begin{equ}\label{eq:eps}
  H''(t)\ge G_0(t)=\epsilon >0~,
\end{equ}
for all $t\ge0$ and
\begin{equ}\label{eq:7}
  H''(t)\ge G_1(t)\ge 4\frac{H(t)^2}{(X-X_0)^2\,  t^2}~,
\end{equ}
for $t>X_1$ (as long as the solution exists). Furthermore, $H(X_1)=H'(X_1)=0$.

Fix now $T_1=2(X_1+1)$. Then, for $t>T_1$, we have $t>\HALF(t+1)$,
and we replace from now on \eref{eq:7} by the simpler
\begin{equ}\label{eq:Hpp}
  H''(t) \ge K_1 \frac{H(t)^2}{(t+1)^2}~, \text{ for } t>T_1~,
\end{equ}
with $K_1=16/(X-X_0)^2$.

The idea is now to deduce from \eref{eq:eps} and \eref{eq:Hpp} an
inequality of the form
\begin{equ}\label{eq:diverge}
H'(t)\ge C H^{1+\delta}(t)\text{  for  }t>T_1\text{ with
}\delta >0~.
\end{equ}
This implies divergence in finite time,
when $H(T_0)>0$. Indeed, if $H(T_0)=c^{-1/\delta }>0$, then
\begin{equ}\label{eq:diverge2}
  H(t)=\frac{1}{(c-C\delta (t-T_0))^{1/\delta }}~.
\end{equ}
One can reformulate this as follows:
If $H(T_0)=A$ and $A\le 1/e$, the optimising $\delta $ in
\eref{eq:diverge2} is $\le1$ and therefore
we find that the divergence time is proportional to $-\log(A)$.
Note that, if, for example, the leading edge of the support (at $x=0$) is like
$|x|^2$  for $x<0$, then this will lead to earlier divergence compared
to $|x|^3$.

We now begin the proof proper.
If $B>0$, we will use repeatedly the inequality
\begin{equ}\label{eq:half}
  \frac{x}{x+B}\ge \HALF ~, \text{ for all } x\ge B~.
\end{equ}
From \eref{eq:eps} we find
\begin{equ}\label{eq:181}
  H(t)\ge K_2 \epsilon t^2~, \text{ for all } t>0~,
\end{equ}
with $K_2=\HALF$.

Substituting \eref{eq:181} into \eref{eq:Hpp}, we get
\begin{equ}\label{eq:182}
  H''(t)\ge K_1K_2\epsilon  H(t) \frac{t^2}{(t+1)^2}\ge K_3 \epsilon
  H(t)~, \text { when } t>T_2~,
\end{equ}
for some large enough $T_2=\const T_1$ , not depending on $\epsilon $.
Since $H'(t)>0$, we can multiply \eref{eq:182} by $H'$ and write it as
\begin{equ}
 \frac{\d}{\d t}(H'(t)^2)\ge K_3 \epsilon  \frac{\d}{\d t}(H(t)^2) \text { when } t>T_2~.
\end{equ}
We integrate from $T_2$ to $t$ and obtain
\begin{equ}
  H'(t)^2 \ge  K_3\epsilon \left( H(t)^2+H'(T_2)^2-H(T_2)^2\right)=
K_3\epsilon H(t)^2 +K_4\epsilon 
  \text { when } t>T_2~,
\end{equ}
for some $K_4$.
From \eref{eq:eps}, we conclude that for large enough $T_3$, one has
\begin{equ}
K_3\epsilon H(t)^2 +K_4\epsilon \ge K_5\epsilon  H(t)^2~,\text{ when } t>T_3~.
\end{equ}
Combining the last two equations we find
\begin{equ}
  H'(t)\ge K_5^{1/2} \epsilon ^{1/2}  H(t),~
  \text { when } t>T_3~.
\end{equ}
Integrating from $T_3$ to $t$ leads to
\begin{equ}\label{eq:exp}
  H(t)\ge H(T_3)\exp\left(K_6 \epsilon ^{1/2} (t-T_3)\right)\ge
  H(T_3)\exp\left(\HALF K_6 \epsilon ^{1/2}t\right)~,\text{ when }t>2T_3, 
\end{equ}
with $K_6=K_5^{1/2}$.
Substituting again into \eref{eq:Hpp}, we get
\begin{equ}
  H''(t) \ge K_7 H(t)^{1+\delta } ~,\text{ for any }\delta >0~,
\end{equ}
since the exponential in \eref{eq:exp} (to the power $\delta>0 $) will
dominate the factor $(t+1)^{-2}$ of \eref{eq:Hpp}, only if $t$ is
sufficiently large. (Note that $K_7$ and this new minimal time $T_4$ will
depend on $\delta $.)
We now multiply the last equation by $H'$ and we obtain
\begin{equ}
  \frac{\d}{\d t}(H'(t)^2) \ge  \frac{2K_7}{2+\delta  }\frac{\d}{\d
    t}\left( H(t)^{2+\delta } \right)~,\text{ for } t>T_4~.
\end{equ}
Integrating from $T_4$ to $t$
we find
\begin{equ}
  H'(t)^2 \ge  \frac{2K_1}{2+\delta  } \left(H(t)^{2+\delta
  }-H(T_{4})^{2+\delta }\right)+H'(T_{4})^2~.
\end{equ}
Taking square roots on both sides and choosing $T_*$
sufficiently larger than $T_4$, we finally arrive at \eref{eq:diverge}
from which we see that there is a divergence in finite time, as in \eref{eq:diverge2}.
\end{proof}

We still need to show the inequality \eref{eq:lem2}.
\begin{lemma}\label{lem:3}
  Let
 \begin{equa}
  G_1(t)&= \int_{t+X_0}^{t+X}\d x \int_{0}^{t}\d\tau
  \int_{x-t+\tau}^{x+t-\tau}\kern-2em\d\xi \,u_x(\xi,\tau)^2~.
 \end{equa}
Let $X>X_0>0$, and assume $u_x(x,t)=0$ for all $|x|\ge t+X$, $0\le
t\le T$. Then one has for all $t\ge X_1\equiv(X-X_0)/2$ the inequality
 \begin{equa}\label{eq:ineq}
  G_1(t)\ge\frac{1}{t+X}\int_{0}^t\d\tau  \int_{\tau +X_0}^{\tau +X} \d\xi
  \,(t-\tau )\,(\xi-\tau -X_0) \,u_x(\xi,\tau)^2~.
\end{equa}
\end{lemma}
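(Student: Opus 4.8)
The plan is to interchange the order of integration in $G_1(t)$ and, for each fixed pair $(\xi,\tau)$, to compute (or bound from below) the one–dimensional measure of the set of outer variables $x$ for which $(\xi,\tau)$ contributes. Since the integrand $u_x(\xi,\tau)^2$ is nonnegative, Tonelli's theorem justifies the reordering and no integrability issue arises. Concretely, I would rewrite
\[
G_1(t)=\int_0^t \d\tau \int_{t+X_0}^{t+X}\d x \int_{x-t+\tau}^{x+t-\tau}\d\xi\, u_x(\xi,\tau)^2,
\]
and perform the $x$–integration last. For fixed $\tau\in[0,t]$ and fixed $\xi$, the constraints $x\in[t+X_0,t+X]$ together with $x-t+\tau\le\xi\le x+t-\tau$ confine $x$ to the interval $\big[\max(t+X_0,\ \xi-t+\tau),\ \min(t+X,\ \xi+t-\tau)\big]$. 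Because $u_x(\xi,\tau)=0$ once $\xi\ge\tau+X$, and because the smallest $\xi$ reached by any admissible $x$ is $\tau+X_0$ (attained at $x=t+X_0$), only $\xi\in[\tau+X_0,\tau+X]$ matters; for such $\xi$ the upper endpoint is simply $\xi+t-\tau$, so restricting the $\xi$–range to $[\tau+X_0,\tau+X]$ loses nothing.

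Next I would compute the length $\ell(\xi,\tau)$ of this $x$–interval by splitting on which term realises the lower endpoint. For $\xi\le 2t-\tau+X_0$ the maximum is $t+X_0$ and $\ell(\xi,\tau)=\xi-\tau-X_0$; for $\xi\ge 2t-\tau+X_0$ the maximum is $\xi-t+\tau$ and $\ell(\xi,\tau)=2(t-\tau)$. In both regimes I then verify the pointwise bound
\[
\ell(\xi,\tau)\ \ge\ \frac{(t-\tau)(\xi-\tau-X_0)}{t+X}.
\]
In the first case, dividing by $\xi-\tau-X_0\ge0$, this reduces to $t+X\ge t-\tau$, i.e.\ $X+\tau\ge0$; in the second case, dividing by $t-\tau\ge0$, it reduces to $2(t+X)\ge \xi-\tau-X_0$, which holds since $\xi-\tau-X_0\le X-X_0\le t+X$ on the relevant range. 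Substituting this lower bound for $\ell(\xi,\tau)$ and using $u_x^2\ge0$ then yields \eref{eq:ineq} directly.

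The interchange of integrals is routine. The one step requiring care is the geometric case analysis for the lower endpoint $\max(t+X_0,\ \xi-t+\tau)$: the two competing expressions cross inside $[\tau+X_0,\tau+X]$ exactly when $\tau\ge t-X_1$ (with $X_1=(X-X_0)/2$), so for small $\tau$ only the first regime occurs while for $\tau$ near $t$ both appear. I expect the main—though still elementary—obstacle to be bookkeeping these $\max/\min$ endpoints correctly and confirming that the backward light–cone slice $[x-t+\tau,\ x+t-\tau]$ never reaches $\xi$–values below $\tau+X_0$, which is what makes the truncation of the $\xi$–range to $[\tau+X_0,\tau+X]$ exact. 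The hypothesis $t\ge X_1$ fixes the regime of interest (the one in which $H$ and its estimates are used) and ensures the endpoints behave as in the case split above; the pointwise length bound itself holds throughout.
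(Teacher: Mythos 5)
Your proposal is correct and follows essentially the same route as the paper: interchange the order of integration, compute the length of the $x$-fiber for fixed $(\xi,\tau)$ according to which expression realises the lower endpoint, and bound that length below by $(t-\tau)(\xi-\tau-X_0)/(t+X)$ using exactly the same elementary inequalities (your two regimes correspond to the paper's three explicit subregions, the first regime being split there according to whether $\tau\lessgtr t-X_1$). The bookkeeping you flag as the main obstacle is handled correctly, including the identification of the crossover at $\xi=2t-\tau+X_0$ and the use of the support property $\xi\le\tau+X$ in the second regime.
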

\begin{proof}
  We first decompose the triple integration into 3 pieces:
  The original integration is over the domain
  \begin{equs}[r0]
    X_0+t\,&\le x \,&\le\,& X+t~,\\
    0\,&\le \tau\,&\le \,&t~,\\
    x-t+\tau \,&\le \xi \,&\le \,&x+t-\tau ~.
  \end{equs}
  Also note that the integrand has support in $\xi\le\tau+X$.
  The three pieces are
  \begin{equs}[r1]
    0\,&\le \tau\,&\le \,&t-X_0~,\\
   \tau +X_0\,&\le \xi\,&\le \,&\tau +X~,\\
        t+X_0\,&\le x\,&\le \,&\xi+t-\tau ~,
  \end{equs}
  and
  \begin{equs}[r2]
    t-X_0\,&\le \tau\,&\le \,&t~,\\
   \tau +X_0\,&\le \xi\,&\le \,&2t-\tau +X_0~,\\
        t+X_0\,&\le x\,&\le \,&\xi+t-\tau ~,
  \end{equs}
  and
  \begin{equs}[r3]
    t-X_0\,&\le \tau\,&\le \,&t~,\\
   2t-\tau +X_0\,&\le \xi\,&\le \,&\tau +x~,\\
       \xi-t+\tau\,&\le x\,&\le \,&\xi+t-\tau ~.
  \end{equs}
One can show that
\eref{r1}--\eref{r3} defines a domain which coincides with that of \eref{r0}, and that the 3
regions are disjoint.

We now give lower bounds for the 3 regions.
For \eref{r1} we get
\begin{equa}
  &\int_{0}^{t-X_1}\d\tau
  \int_{\tau +X_0}^{\tau +X}\d\xi
  \,u_x(\xi,\tau)^2~\int_{t+X_0}^{\xi+t-\tau}  \d x ~,\\
  &\int_{0}^{t-X_1}\d\tau
  \int_{\tau +X_0}^{\tau +X}\d\xi
  \,u_x(\xi,\tau)^2~ (\xi-\tau -X_0) ~.
\end{equa}
We bound the last factor from below by
\begin{equa}
  (\xi-\tau -X_0)\ge \frac{t-\tau }{t+X}(  \xi-\tau -X_0)~.
\end{equa}
Similarly, for \eref{r2}, the $x$ integration is bounded from below in
exactly the same way.
Finally, for \eref{r3}, using the support property $\xi\le\tau +X$, we
find (since $t>X_1$ and $X_1<X$),
\begin{equa}
 \xi-\tau -X_0 \le X-X_0=2X_1<t+X_1< t+X~.
\end{equa}
This leads to a bound for the $x$ integral of the form
\begin{equa}
  \int_{\xi-t+\tau }^{\xi+t-\tau } \d x=2(t-\tau )\ge (t-\tau
  )\frac{\xi-\tau -X_0}{t+X}~.
\end{equa}

Collecting terms, we finally find that
\begin{equa}
  G_1(t)\ge\frac{1}{t+X}\int_{0}^t\d\tau  \int_{\tau +X_0}^{\tau +X} \d\xi
  \,(t-\tau )\,(\xi-\tau -X_0) \,u_x(\xi,\tau)^2~.
\end{equa}
\end{proof}

\section{Numerics}\label{a:numerics}
We integrate all the equations by using the Dorman-Prince
\cite{Ehairer} Runge-Kutta integrator. The functions are discretised
in $2^{13}$ equidistant points. Derivatives are computed by
using 5-point stencils. Divergence is defined by
$\max(|u_{xx}|)>10^6$.
Special care has been given to assert the quality of the results: We
compute with a tolerance (if achievable) of $10^{-11}$. There are two
situations where the integration can fail: The time steps gets too
short (this happens sometimes when $\alpha $ is large and $\beta $ is
small).
The other problem is the size of the domain in $x$: We take periodic
boundary conditions on $[-\pi,\pi]$, and initial data which vanish at
these boundaries. If, during time evolution, the value of $|u(\pm\pi,t)|$
exceeds $10^{-4}$, we consider that the wave-part of the evolution has
``hit'' the boundary, and we stop the calculation. This happens
especially if $\beta $ is large and $\alpha $ is small, because in
this case, the wave moves with speed $\sqrt{\beta }$, and may hit the
boundary \emph{before} $\alpha $ can lead to a divergence.

\begin{conjecture}
  We believe that this phenomenon might account for the idea that
  large $\beta$ regularises the PDE in cosmological simulations as
  discussed in \cite{hassani_JCAP}. But, as we 
  show in \sref{sec:appb} the mathematical fact is that all solutions
  diverge in a finite time (unless they are 0). However, this will require a detailed study in a cosmological context.
\end{conjecture}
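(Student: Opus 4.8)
The plan is to prove blowup by the classical averaging (or ``concavity'') method of Kato, John and Rammaha, adapted to the gradient nonlinearity $u_x^2$. First I would write the solution through the one-dimensional Duhamel/d'Alembert representation, $u(x,t)=u_0(x,t)+\HALF\int_0^t\int_{x-t+\tau}^{x+t-\tau} u_x(\xi,\tau)^2\,\d\xi\,\d\tau$, with $u_0$ the free wave carrying the data. The value of this representation is that the nonlinear contribution is manifestly nonnegative, so $u$ dominates its own free evolution wherever the latter is positive. I would then focus on the right-moving front, i.e.\ the slab $t+X_0\le x\le t+X$: there the backward light cone meets only the initial data (the term $f(x+t)$ drops out by the support assumption $|x|<X$), and the free part collapses exactly to $u_0(x,t)=M(x-t)$. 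This is precisely where the positivity hypothesis on $M$ will be used.

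Next I would introduce the weighted moving average $H(t)=\int_{X_1}^t (t-\tau)\int_{\tau+X_0}^{\tau+X} u(\xi,\tau)\,\d\xi\,\d\tau$ with $X_1=(X-X_0)/2$, chosen so that $H(X_1)=H'(X_1)=0$. Two differentiations give $H''(t)=\int_{t+X_0}^{t+X} u(\xi,t)\,\d\xi$, which I split as $G_0+G_1$. The free part is the constant $G_0=\int_{X_0}^X M=\epsilon>0$, yielding the coercive lower bound $H''\ge\epsilon$; this is exactly where the assumptions $M\ge0$ on $(X_0,X)$ and $\int_{X_0}^X M=\epsilon$ enter. The nonnegative nonlinear part $G_1$ then has to be bounded below by a power of $H$ itself, and closing this loop is the heart of the argument.

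The hard part will be the lower bound on $G_1$ in terms of $H$. One must reorganise the triple integral defining $G_1$ and insert the weight $\phi=(t-\tau)(\xi-\tau-X_0)$ so that a weighted Cauchy--Schwarz (in the form $(\int\phi u_x)^2\le(\int\phi)(\int\phi u_x^2)$) gives $G_1\ge F^2/J$ with $J=\int\phi=(X-X_0)^2 t^2/4$, and so that integrating $F=\int\phi\,u_x$ by parts in $\xi$ reproduces exactly $-H(t)$ (the boundary terms vanishing because $u$ is zero at the leading edge by finite speed and $\phi$ vanishes at the inner edge). This yields $H''\ge\epsilon+4H^2/((X-X_0)^2 t^2)$. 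Matching the geometry of the domain decomposition to the weight is the only genuinely delicate point; everything else is bookkeeping.

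Finally I would bootstrap the differential inequality, assuming for contradiction that the solution is global (case (i) of the dichotomy), so that $H(t)$ is finite for all $t$. From $H''\ge\epsilon$ I first get $H\ge\HALF\epsilon t^2$; feeding this into the quadratic term upgrades the bound to $H''\ge c\,\epsilon H$ for large $t$, and multiplying by $H'$ and integrating gives exponential growth $H(t)\ge H(T_3)\exp(c\,\epsilon^{1/2}t)$. Reinserting this exponential into $H''\ge K H^2/t^2$ absorbs the $t^{-2}$ factor and produces a genuinely superlinear inequality $H''\ge K_7 H^{1+\delta}$ for some $\delta>0$; one further multiplication by $H'$ and integration yields $H'\ge C H^{1+\delta/2}$ for large $H$, whose solutions reach $+\infty$ at a finite $T_*$ by comparison with $y'=Cy^{1+\delta/2}$. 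This contradicts the assumed finiteness of $H$, so the global case is impossible and the solution must blow up in finite time in the stated sense.
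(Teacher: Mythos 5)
Your argument is essentially identical to the paper's own proof of Theorem~\ref{thm:t1} in Appendix~\ref{sec:appb}: the same functional $H$, the same splitting $H''=G_0+G_1$ with $G_0=\epsilon$, the same weighted Cauchy--Schwarz with weight $\phi=(t-\tau)(\xi-\tau-X_0)$ giving $G_1\ge H^2/J$, and the same bootstrap through $H\ge\HALF\epsilon t^2$, exponential growth, and $H'\ge CH^{1+\delta}$ to finite-time blowup. The only caveat, which applies equally to the paper's proof, is that this establishes divergence only for data with $M\ge0$ near the edge and $\int_{X_0}^X M>0$, so it supports the conjecture's blanket claim (``all solutions diverge unless they are $0$'') only for that class of initial conditions, as the paper itself concedes in the remark following Theorem~\ref{thm:t1}.
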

\end{appendices}
\section*{Acknowledgements} \fh{We thank Edriss Titi to have brought us together for this paper.}
We profited from discussions with Julian Adamek, Pierre Collet, Martin
Kunz, Sabir Ramazanov, Pan Shi, Peter Wittwer, and  Alexander Vikman. 
This work was partially supported by an ERC advanced grant (Bridges 290843) and by SNF Swissmap.
\section*{References}
\bibliographystyle{nonlinearity}
\bibliography{bibliography}
\end{document}